\newtheorem{theorem}{Theorem}
\newtheorem{definition}[theorem]{Definition}
\newcommand{\R}{\mathbb R}
\title{\sc Stochastic safety radius on Neighbor-Joining method and
  Balanced Minimal Evolution on small trees}
\author{{\small \bf Jing Xi}\\
\small Department of Mathematics\\ \small North Carolina State University\\ \small jxi2@ncsu.edu
\and
 {\small \bf Jin Xie} \\ \small Department of Statistics \\ \small
 University of Kentucky\\
\small jin.xie@uky.edu
\and
 {\small \bf Ruriko Yoshida} \\ \small Department of Statistics \\ \small
 University of Kentucky\\
\small ruriko.yoshida@uky.edu
\and
 {\small \bf Stefan Forcey} \\ \small Department of Mathematics \\ \small
 University of Akron\\
\small sforcey@uakron.edu
}
\date{}
\begin{document}
\maketitle

\begin{abstract}
A distance-based method to reconstruct a phylogenetic tree with $n$ leaves
takes a distance matrix, $n \times n$ symmetric matrix with $0$s in the
diagonal, as its input and reconstructs a tree with $n$ leaves using tools in
combinatorics. A safety radius is a radius from a tree metric (a distance
matrix realizing a true tree) within which the input distance matrices must all
lie in order to satisfy a precise combinatorial condition under which the
distance-based method is guaranteed to return a correct tree.  A stochastic
safety radius is a safety radius under which the distance-based method is
guaranteed to return a correct tree within a certain probability. In this paper
we investigated stochastic safety radii for the neighbor-joining (NJ) method
and balanced minimal evolution (BME) method for $n =  5$.
\end{abstract}

\section{Introduction}
A \emph{phylogenetic tree} (or \emph{ phylogeny}) on the set $X = [n]$ is a
graph which summarizes the relations of evolutionary descent between different
species, organisms, or genes. Phylogenetic trees are useful tools for
organizing many types of biological information, and for reasoning about events
which may have occurred in the evolutionary history of an organism.  There has
been much research on phylogenetic tree reconstructions from alignments, and
{\em distance-based} methods are some of the best-known phylogenetic tree
reconstruction methods.

Once we compute pairwise distances $\forall (x, y) \in X\times X$
from an alignment,
we can reconstruct a phylogenetic tree
via distance-based methods.
 In contrast with parsimony methods,
 distance-based methods have been shown to be statistically
consistent in all settings  (such as the long branch attraction)
\cite{Felsenstein1978,consis,Gascuel2003,Gascuel2009}. Distance-based methods
also have a huge speed advantage over parsimony and likelihood methods in terms
of computational time, and hence enable the reconstruction of trees with large
numbers of taxa.  However, a distance-based method is not a perfect method to
reconstruct a phylogenetic tree from the input sequence data set: in the
process of computing a pairwise distance, we ignore interior nodes of a tree as
well as a tree topology, and thus we lose information from the input sequence
data sets. Therefore it is important to understand how a distance based method
works and how robust it is with noisy data sets.

One way to measure its robustness is called the safety radius.  A safety radius
is a radius from a tree metric (a distance matrix realizing a true tree) within
which the input distance matrices must all lie in order to satisfy a precise
combinatorial condition under which the distance-based method is guaranteed to
return a correct tree.  More precisely, we have the following definition.
\begin{definition}\label{safety1}
Suppose we have a vector representation of all pairwise distances $\delta \in \R^{n
  \choose 2}$ and suppose $d_{T, w}:=(d_{xy})_{x, y \in X}$, where $T \in \tau_n$, $\tau_n$
is the set of all phylogenetic unrooted trees with leaves $X = [n]$,
and $w \in \R^{2n-3}_+$, where $\R_+$ is the set of all non-negative
real numbers,  is a vector representation of the set of branch
lengths in $T$, is a tree metric, i.e., $d_{xy} \geq 0$ is the total of branch
lengths in the unique path from a leaf $x$ to a leaf $y$ in $T$.  Let $w_{\min}$ be the smallest interior branch length in $T$.
Then a method $M$ for reconstructing a phylogenetic $X$-tree from each
distance matrix $\delta$ on $X$ is said to have a $l_{\infty}$ safety radius $\rho_n$ if for any binary phylogenetic tree $T$ with $n$ leaves we have:
\[
||\delta - d_{T, w}||_\infty < \rho_n \cdot w_{\min} \Rightarrow
  M(\delta) = T.
\]
\end{definition}
Notice that the definition of the safety radius defined in Definition
\ref{safety1} is deterministic even though the input data $\delta$ is
a multivariate random variable.  Thus, this is more meaningful to
define in terms of probability distribution.  Thus, in 2014
Steel and Gascuel introduced a notion of {\em stochastic safety
  radius} \cite{steel2014}.

\begin{definition}[Stochastic safety radius]\label{safety2}
Suppose we allow $\sigma^2$ to depend on $n$:
$
\displaystyle\sigma^2 = \frac{c^2}{\log(n)},
$
for some value $c\neq 0$.
For any $ \eta> 0$, we say that a
distance-based tree reconstruction method $M$ has $\eta$-stochastic safety radius
$s = s_n$ if for every binary phylogenetic $X$-tree $T$ on n leaves, with minimum
interior edge length $w_{\min}$, and with the distance matrix $\delta$ on $X$ described by the
random errors model, we have
\[
c < s\cdot w_{\min} \Rightarrow P(M(\delta) = T) \geq 1 - \eta.
\]
\end{definition}

In this paper we focus on two distance-based methods, namely {\em
  neighbor-joining} (NJ) method and {\em balanced minimal evolution} (BME)
method.
In 2002, Desper and Gascuel introduced a BME
principle, based on a branch length estimation scheme of Pauplin
\cite{Pauplin}.  The guiding principle of minimum evolution tree reconstruction
methods is to return a tree whose total length (sum of branch lengths) is
minimal, given an  input dissimilarity map.  The BME method is a special case
of these  distance-based methods  wherein branch lengths are estimated  by a
weighted least-squares method (in terms of the input $\delta$ and the
tree $T \in \tau_n$ in question)  that puts more emphasis on shorter
distances than longer
ones. Each labeled tree topology gives rise to a vector, called herein {\em the
BME vector,}  which is obtained from Pauplin's formula.
In 2000, Pauplin
showed that the BME method is equivalent to optimizing a linear function, the
dissimilarity map,  over the BME representations of binary trees, given by the
BME vectors \cite{Pauplin}. Eickmeyer et. al. defined the {\it $n^{th}$ BME
polytope} as the convex hull of the BME vectors  for all binary trees on a
fixed number $n$ of taxa.  Hence the BME method is equivalent to optimizing a
linear function, namely, the input distance matrix $\delta$, over a BME
polytope.  They characterized  the behavior of the BME
phylogenetics on such data sets using the BME polytopes and the {\it BME
cones}, i.e., the normal cones of the BME polytope.

The study of related geometric structures, the BME cones, further clarifies
the nature of the link between phylogenetic tree reconstruction using the BME
criterion and using the NJ Algorithm developed by Saitou and Nei
\cite{Saitou1987}.  In 2006, Gascuel and
Steel showed that the NJ Algorithm, one of the most popular phylogenetic tree
reconstruction algorithms, is a greedy algorithm for finding the BME tree
associated to a  distance matrix $\delta$ \citep{Steel2006}.  The NJ
Algorithm relies on a particular criterion for iteratively selecting cherries;
details on cherry-picking and the NJ Algorithm are recalled later in the
paper.  In 2008, based on the fact that the selection criterion for
cherry-picking is linear in the  distance matrix $\delta$ \cite{Bryant2005}, Eickmeyer
et. al. showed that the NJ Algorithm will pick cherries to merge in a
particular order and output a particular tree topology $T$ if and only if the
pairwise distances satisfy a system of linear inequalities, whose solution set
forms a polyhedral cone in  ${\R}^{n \choose 2}$ \cite{kord2009}.  They
defined such a cone as an {\em NJ cone}.  In general, the sequence of cherries
chosen by the NJ Algorithm is not unique, hence multiple  distance matrix $\delta$
will be assigned by the NJ Algorithm to a single fixed tree topology $T.$  The
set of all  distance matrix $\delta$ for which  the NJ Algorithm returns a fixed tree
topology $T$ is a union of NJ cones, however this union is not convex in
general.  Eickmeyer et. al. characterized those dissimilarity
maps for which the NJ Algorithm  returns the BME tree, by comparing the NJ
cones with the BME cones, for eight or fewer taxa \cite{kord2009}.

In this paper we use the BME cones and NJ cones in order to
investigate their stochastic safety radius for $n = 5$.
Here we assume that the multivariate random variable $\delta$ is
defined as follows:
\[
\delta_{xy} = d_{xy} + \epsilon_{xy},
\]
where $\epsilon_{xy} \sim N(0, \sigma^2)$, the Gaussian distribution with
mean $0$ and a standard deviation $\sigma >0$, are independent for all
pairwise distance
$(x, y) \in X \times X$.
This paper is
organized as follows:  Section \ref{4points} shows the probability
distribution of a random $\delta$ so that it satisfies the {\em four
  point rule} for all distinct quartets in $[n]$ with a fixed $T$.  Zarestkii in \cite{Zarestkii2965} defined
the notion of the four point rule as follows:
we select the tree topology $xy|wz$ (which means there is an internal
edge between $\{x, y\}$ and $\{w, z\}$ for a distinct $x, y, w, z
\in [n]$)  if
\begin{equation}\label{eq:4ptrule}
\delta_{xy} + \delta_{wz} < \min \{\delta_{xw} + \delta_{yz}, \,
\delta_{xz} + \delta_{yw}\}.
\end{equation}
 In Section \ref{bme} we will show multivariate
probability distribution $P(M(\delta) = T)$ where $T \in \tau_5$ is
fixed and $M$ is
the BME method, in Section \ref{nj} shows the multivariate probability distribution $P(M(\delta) = T)$ where $T \in \tau_5$ is
fixed and  $M$ is
the NJ method.  Finally in Section \ref{simulation} we will show some
computational results on these probability distributions and we have
shown the plot for the stochastic safety radii for the NJ and the BME
methods varying $\eta$ and $c$ for $n = 5$ (Figure \ref{fig:safety}). 
As shown in
Figure \ref{fig:safety} both stochastic safety radii are basically
almost identical in this case since the probability distributions
$P(M(\delta) = T)$ for the NJ and for the BME methods are almost
identically same shown in Figure \ref{fig:comparison} for $n = 5$ and
$w_{\min} = 1$.

\section{Probability distribution on ``four point
  rule''}\label{4points}

For a tree containing random errors, the pairwise distance between two leaves
is
$$\delta_{xy}=d_{xy}+\epsilon_{xy}$$ where $x$ and $y$ are different taxas of a tree, $d_{xy}$ is the true pairwise distance between
taxa $x$ and $y$, and $\epsilon_{xy}'s$ follow i.i.d. Gaussian Distribution
with mean 0 and variance $\sigma^2$. Intuitively in this section we are
computing a probability distribution such that if we select a random $\delta
\in \R^{n \choose 2}$, $\delta$ satisfies Equation \ref{eq:4ptrule} if and only
if there is an internal edge between $\{x, y\}$ and $\{w, z\}$ in $T \in
\tau_n, $ for all distinct $\{x, \, y, \, w, \, z\} \in [n]$. We find a formula
for the probability, for 5 taxa, that a tree metric with random errors still
obeys the original four-point inequalities on each subset of four leaves.

\begin{wrapfigure}{r}{80mm}
  \centering
  \includegraphics[width=1.5in]{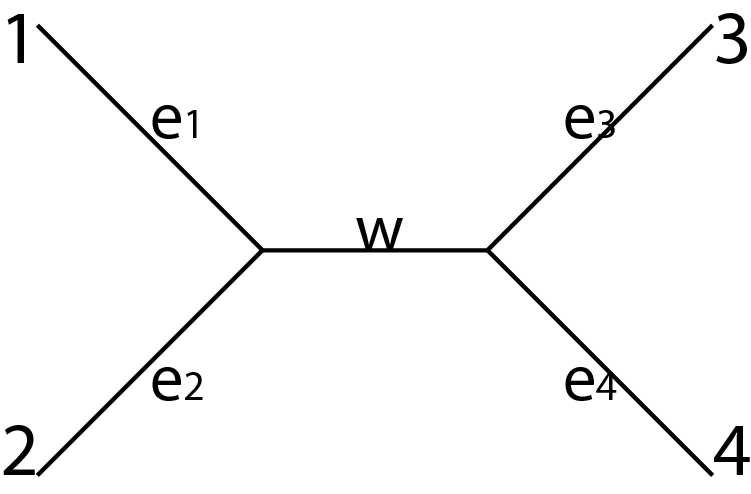}
  \caption{4 taxa tree}
  \label{fig:4taxa}
{\footnotesize }
\end{wrapfigure}
We first consider four point rule on 4 taxa tree. Suppose Figure \ref{fig:4taxa} is the true tree.
Then for a random tree, the following inequalities must be satisfied in order to return the correct tree:

\begin{equation}
\begin{array}{ccl}
\delta_{12}+\delta_{34} \le \delta_{13}+\delta_{24} \\
\delta_{12}+\delta_{34} \le \delta_{14}+\delta_{23}
\end{array}
\end{equation}

Since
\begin{equation}
\begin{array}{ccl}
\delta_{12}&=& e_1+e_2+\epsilon_{12} \\
\delta_{34}&=& e_3+e_4+\epsilon_{34} \\
\delta_{13}&=& e_1+e_3+w+\epsilon_{13} \\
\delta_{24}&=& e_2+e_4+w+\epsilon_{24} \\
\delta_{14}&=& e_1+e_4+w+\epsilon_{14} \\
\delta_{23}&=& e_2+e_3+w+\epsilon_{23}
\end{array}
\end{equation}

Then we can have
\begin{equation}\label{ineq1}
\begin{array}{ccl}
\epsilon_{12}+\epsilon_{34} \le 2w +\epsilon_{13}+\epsilon_{24} \\
\epsilon_{12}+\epsilon_{34} \le 2w +\epsilon_{14}+\epsilon_{23}
\end{array}
\end{equation}

Since $\epsilon_{xy} \stackrel{iid}{\sim} N(0,\sigma^2)$, we know
$\epsilon_{12}+\epsilon_{34}, \epsilon_{13}+\epsilon_{24},
\epsilon_{14}+\epsilon_{23} \stackrel{iid}{\sim} N(0,2\sigma^2)$. Let $f$ and
$F$ be the density and cumulative distribution functions of $N(0,1)$,
respectively. Then the probability that Inequality \ref{ineq1} is satisfied,
i.e. the probability that a random distance matrix $\delta$ returns the true
tree, equals to:

\begin{equation}
\int_{-\infty}^{\infty}f(x)[1-F(\frac{x-2w}{\sqrt{2}\sigma})]^2dx
\end{equation}

Now we consider four point rule on 5 taxa tree. Suppose the true tree is Figure \ref{fig:5taxa}. We need to check the rule on all possible combinations of four distinct leaves in this tree. It is trivial to see we only have 5 different combinations. For each of them, we could construct two inequalities similar to the way we obtained Equation \ref{ineq1}. Therefore, we have 10 inequalities for the 5 combinations of 4 distinct taxa:
\begin{equation}
\begin{array}{ccl}
\epsilon_{12}+\epsilon_{34} &\le& 2w_1 +\epsilon_{13}+\epsilon_{24} \\
\epsilon_{12}+\epsilon_{34} &\le& 2w_1 +\epsilon_{14}+\epsilon_{23} \\

\epsilon_{12}+\epsilon_{35} &\le& 2w_1 +\epsilon_{13}+\epsilon_{25} \\
\epsilon_{12}+\epsilon_{35} &\le& 2w_1 +\epsilon_{15}+\epsilon_{23} \\

\epsilon_{12}+\epsilon_{45} &\le& 2w_1+2w_2 +\epsilon_{14}+\epsilon_{25} \\
\epsilon_{12}+\epsilon_{45} &\le& 2w_1+2w_2 +\epsilon_{15}+\epsilon_{24} \\

\epsilon_{13}+\epsilon_{45} &\le& 2w_2 +\epsilon_{14}+\epsilon_{35} \\
\epsilon_{13}+\epsilon_{45} &\le& 2w_2 +\epsilon_{15}+\epsilon_{34} \\

\epsilon_{23}+\epsilon_{45} &\le& 2w_2 +\epsilon_{24}+\epsilon_{35} \\
\epsilon_{23}+\epsilon_{45} &\le& 2w_2 +\epsilon_{25}+\epsilon_{34} \\
\end{array}
\end{equation}

Let $\boldsymbol{\epsilon} \equiv (\epsilon_{12},\epsilon_{13},\epsilon_{14},\ldots,\epsilon_{45})^T_{10\times1}$ and
\[
U =  \left( \begin{array}{cccccccccc}
1 & -1 &  0 &  0 &  0 & -1 &  0 &  1 &  0 & 0  \\
1 &  0 & -1 &  0 & -1 &  0 &  0 &  1 &  0 & 0  \\
1 & -1 &  0 &  0 &  0 &  0 & -1 &  0 &  1 & 0  \\
1 &  0 &  0 & -1 & -1 &  0 &  0 &  0 &  1 & 0  \\
1 &  0 & -1 &  0 &  0 &  0 & -1 &  0 &  0 & 1  \\
1 &  0 &  0 & -1 &  0 & -1 &  0 &  0 &  0 & 1  \\
0 &  1 & -1 &  0 &  0 &  0 &  0 &  0 & -1 & 1  \\
0 &  1 &  0 & -1 &  0 &  0 &  0 & -1 &  0 & 1  \\
0 &  0 &  0 &  0 &  1 & -1 &  0 &  0 & -1 & 1  \\
0 &  0 &  0 &  0 &  1 &  0 & -1 & -1 &  0 & 1
\end{array} \right)_{10\times10},
\]
then the 10 inequalities are:
\begin{equation}\label{ineq2}
U \boldsymbol{\epsilon} \le (2w_1,2w_1,2w_1,2w_1,2w_1+2w_2,2w_1+2w_2,2w_2,2w_2,2w_2,2w_2)^T.
\end{equation}
Thus the probability that  a 5-leaved tree metric with random errors still obeys the original-four point inequalities on each subset of four leaves is the probability that inequality (\ref{ineq2}) is satisfied.

\section{Probability distribution of the output tree via the BME method}\label{bme}

This method begins with a given set of $n$ items and a symmetric (or upper
triangular) square $n\times n$ \emph{distance matrix} whose entries are
numerical dissimilarities, or distances, between pairs of items. From the
distance matrix the BME method constructs a binary tree with the $n$ items
labeling the $n$ leaves. The BME tree has the property that the distances
between its leaves most closely match the given distances between corresponding
pairs of taxa.

By ``most closely match'' in the previous paragraph we mean the following: the
reciprocals of the distances between leaves are the components of a vector
$\mathbf{c}$, and this vector minimizes the dot product $\mathbf{c}\cdot\delta$
where $\delta$ is the list of distances in the upper triangle of the distance
matrix.

More precisely: Let the set of $n$ distinct species, or taxa, be called $X.$
For convenience we will often let $X = [n] = \{1,2,\dots,n\}.$ Let vector
$\delta$ be given, having ${n \choose 2}$ real valued components $\delta_{xy}$,
one for each pair $\{x,y\}\subset X.$ There is a vector $\mathbf{c}(t)$ for
each binary tree $t$ on leaves $X,$ also having ${n \choose 2}$ components
$c_{xy}(t)$, one for each pair $\{x,y\}\subset X.$ These components are ordered
in the same way for both vectors, and we will use the lexicographic ordering:
$\delta =
(\delta_{12},\delta_{13},\dots,\delta_{1n},\delta_{23},\delta_{24},\dots,\delta_{n-1,n})
$.

We define, following Pauplin \cite{Pauplin}:
 $$\mathbf{c}_{xy}(t) = \frac{1}{2^{l(x,y)}}$$
where $l(x,y)$ is the number of internal nodes (degree 3 vertices) in the path
from leaf $x$ to leaf $y.$ The BME tree for the vector $\delta$ is the binary
tree $t$ that minimizes $\delta\cdot\mathbf{c}(t)$ for all binary trees on
leaves $X.$ Rather than the original fractional coordinates $\mathbf{c}_{xy}$
we will scale by a factor of $2^{n-2},$ giving coordinates
$$\mathbf{x}_{xy}=2^{n-2}\mathbf{c}_{xy} = 2^{n-2-l(x,y)}.$$
 Since the furthest apart any two leaves may
be is a distance of $n-2$ internal nodes, this scaling will result in integral
coordinates. Thus we can equivalently say that the BME tree for the vector
$\delta$ is the binary tree $t$ that minimizes $\delta\cdot\mathbf{x}(t)$ for
all binary trees on leaves $X.$

Consider a tree metric $d_{T}$ which arises from a binary tree $T$ with five
leaves $\{a,b,c,d,e\}.$ Let the interior edges $e_1$ and $e_2$ have lengths
$w_i=l(e_i).$

\begin{theorem}Let $T$, the tree for which $d_T$ is a tree metric, have cherries $\{a,b\}$
and $\{c,d\}.$

 Let:

 $ y_1 =   2   \epsilon_{ac} +    \epsilon_{ad}           -
\epsilon_{bc} +
\epsilon_{bd} + 3   \epsilon_{be}           -  \epsilon_{ce}\\
 y_2 =   2
\epsilon_{ac} +            \epsilon_{ae}   -  \epsilon_{bc} + 3 \epsilon_{bd}
+    \epsilon_{be}   -  \epsilon_{cd}\\
 y_3 =   -  \epsilon_{ac}
+    \epsilon_{ad} + 3   \epsilon_{ae} + 2   \epsilon_{bc} +    \epsilon_{bd}
-1  \epsilon_{ce}\\
y_4 =   -  \epsilon_{ac} + 3   \epsilon_{ad} +  \epsilon_{ae} + 2
\epsilon_{bc} +            \epsilon_{be}   - \epsilon_{cd}\\
$

Let:

$ z_1 =   -  \epsilon_{ac} + 3   \epsilon_{ad} +
\epsilon_{bd} +    \epsilon_{be}   -  \epsilon_{cd} + 2   \epsilon_{ce}\\
 z_2
= -  \epsilon_{ac} +         3   \epsilon_{ae} +            \epsilon_{bd} +
\epsilon_{be} + 2   \epsilon_{cd}   -  \epsilon_{ce}\\
z_3 =            \epsilon_{ad} +    \epsilon_{ae}   -  \epsilon_{bc} + 3
\epsilon_{bd} -  \epsilon_{cd} + 2   \epsilon_{ce}\\
z_4 =            \epsilon_{ad} +  \epsilon_{ae}   -  \epsilon_{bc} + 3
\epsilon_{be} + 2 \epsilon_{cd}   -  \epsilon_{ce}$

Then the BME method will return the correct tree $T$ if and only if:

$ 4w_2>\epsilon_{ac} +\epsilon_{bc}
+2\epsilon_{de}-\min(\epsilon_{ad}+\epsilon_{bd}+2\epsilon_{ce},
~\epsilon_{ae}+\epsilon_{be}+2\epsilon_{cd})\\
 6w_1 + 4w_2>3\epsilon_{ab}
+2\epsilon_{de}-\min(y_1,~y_2,~y_3,~y_4)\\
6w_1 +
6w_2>3\epsilon_{ab}+3\epsilon_{de}-\min(3\epsilon_{ae}+3\epsilon_{bd},~3\epsilon_{ad}+3\epsilon_{be})\\
4w_1 + 6w_2>2\epsilon_{ab} +3\epsilon_{de}-\min(z_1,~z_2,~z_3,~z_4)\\
4w_1>2\epsilon_{ab} +\epsilon_{cd}
+\epsilon_{ce}-\min(\epsilon_{ad}+\epsilon_{ae}+2\epsilon_{bc},
~2\epsilon_{ac}+\epsilon_{bd}+\epsilon_{be}) $
\end{theorem}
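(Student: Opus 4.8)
The plan is to invoke the defining property of the BME method directly: the method returns $T$ precisely when $\delta\cdot\mathbf{x}(T)<\delta\cdot\mathbf{x}(t)$ for every one of the $14$ binary trees $t\neq T$ on the five leaves $\{a,b,c,d,e\}$. Writing each comparison as $\delta\cdot(\mathbf{x}(t)-\mathbf{x}(T))>0$, the correct tree is returned if and only if all $14$ of these linear inequalities hold. So the first step is to enumerate the $15$ binary trees on five leaves and record each BME vector $\mathbf{x}(t)$ using the scaled Pauplin coordinates $\mathbf{x}_{xy}(t)=2^{\,3-l(x,y)}$, where $l(x,y)$ counts the internal nodes on the $x$--$y$ path.

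The key simplification, which I would isolate as a preliminary observation, is that the deterministic thresholds depend only on the interior edge lengths $w_1,w_2$ and not on the five pendant edges. This follows because, for every binary tree $t$ and every fixed leaf $x$, one has $\sum_{y\neq x}\mathbf{x}_{xy}(t)=2^{\,n-2}=8$: at each internal node the Pauplin weight halves along each branch, so the weights emanating from a leaf sum to a constant independent of the topology. Consequently the pendant-edge contributions to $\delta\cdot(\mathbf{x}(t)-\mathbf{x}(T))$ cancel, and only an interior edge $e_i$ contributes, with coefficient $\sum_{\{x,y\}\text{ crossing }e_i}\big(\mathbf{x}_{xy}(t)-\mathbf{x}_{xy}(T)\big)$. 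Substituting $\delta_{xy}=d_{xy}+\epsilon_{xy}$ then turns the comparison for a given $t$ into an inequality whose left-hand side, a nonnegative combination of $w_1,w_2$, equals one of $4w_2$, $6w_1+4w_2$, $6w_1+6w_2$, $4w_1+6w_2$, $4w_1$, and whose right-hand side is $\epsilon\cdot(\mathbf{x}(T)-\mathbf{x}(t))$.

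Next I would sort the $14$ competing trees by this left-hand threshold. A direct computation shows they fall into exactly five groups, of sizes $2,4,2,4,2$, realizing the five threshold values above, consistent with $2+4+2+4+2=14$. Within a single group every inequality shares the same left side, so they may be merged: the correct tree beats all trees in a group if and only if the threshold exceeds the noise term against the \emph{tightest} competitor. Since $\epsilon\cdot(\mathbf{x}(T)-\mathbf{x}(t_j))=A-C_j$ for a shared constant $A$ and group-specific $C_j$, and $\max_j(A-C_j)=A-\min_j C_j$, the merged inequality takes the form $\text{threshold}>A-\min_j C_j$. This produces the two-term minima for the three groups of size two and the four-term minima $\min(y_1,\dots,y_4)$ and $\min(z_1,\dots,z_4)$ for the two groups of size four; matching the explicit noise vectors $\mathbf{x}(T)-\mathbf{x}(t)$ against the stated expressions identifies the $y_i$ with the $6w_1+4w_2$ group and the $z_i$ with the $4w_1+6w_2$ group.

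The main obstacle is bookkeeping rather than any single conceptual hurdle: one must correctly list all $15$ topologies, compute their BME vectors, verify that the interior-edge coefficients collapse into exactly the five thresholds, and match each of the $14$ noise vectors to the correct $y_i$, $z_i$, or two-term entry, including checking that each group of competitors genuinely shares the constant $A$. The cleanest safeguard is the summation identity $\sum_{y\neq x}\mathbf{x}_{xy}=2^{\,n-2}$, which forces the pendant edges to drop out and keeps every threshold expressed purely in $w_1,w_2$; with that in hand the remainder is a finite, checkable enumeration.
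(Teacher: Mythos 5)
Your proposal is correct and follows essentially the same route as the paper's proof: compare $\delta\cdot\mathbf{x}(T)$ against all $14$ alternate trees, substitute $\delta = d_T+\epsilon$, observe that the deterministic sides collapse into the five thresholds $4w_2$, $6w_1+4w_2$, $6w_1+6w_2$, $4w_1+6w_2$, $4w_1$ (with groups of sizes $2,4,2,4,2$), and merge each group into a single inequality via the minimum over the noise terms. Your explicit appeal to the Pauplin row-sum identity $\sum_{y\neq x}\mathbf{x}_{xy}(t)=2^{n-2}$ to show the pendant edges cancel is a detail the paper leaves implicit, but it is a justification of the same computation rather than a different argument.
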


\begin{proof}
The BME method will return the correct tree $T$ if and only if
$$({d}_{T}+{\epsilon})\cdot\mathbf{x}(T) < ({d}_{T}+{\epsilon})\cdot\mathbf{x}(t)$$ for
all alternate trees $t.$ This is true since the 1-skeleton of the BME polytope
for $n=5$ is the complete graph on the 15 vertices.

Further, the above inequality holds iff
$${d}_T\cdot(\mathbf{x}(t)-\mathbf{x}(T))> {\epsilon}\cdot(\mathbf{x}(T)-\mathbf{x}(t))$$ for
all alternate trees $t.$ Note that all the trees with five leaves have the same
topology.

There are 14 other possible trees $t.$  These separate into 5 sets of trees for
which the left hand side of the above inequality is respectively $4w_2, 6w_1 +
4w_2, 6w_1 + 6w_2, 4w_1 + 6w_2, $ or $4w_1.$  For each of these we collect the
right hand sides, and take their maximum.
\end{proof}

\section{Probability distribution of the output tree via the
  NJ method}\label{nj}
%
%
%

\subsection{H-representation of NJ cones \cite{Rudy2008}}\label{sec:njcone}

Recall that the tree metric $d_{T,w}=(d_{xy})_{1\leq x,y \leq n}$ is a symmetric matrix with $d_{xx}=0$. We can flatten the entries in the upper triangle (diagonal entries are omitted) by columns:
\[
d_{xy}= d_{\frac{(y-1)(y-2)}{2}+x},
\]
where $1\leq x \leq n-1$ and $x+1 \leq y \leq n$. Let $\mathbf d = (d_1, d_2, \ldots, d_m)$, $m:= {n \choose 2}$, be the vector of tree metric after flattening. Notice here this flattening defines a one-to-one mapping between the indices:
\[
I_f: \{(x,y)\in\mathbb Z: 1\leq x \leq n-1, x+1 \leq y \leq n\} \rightarrow \{1,2,\ldots,m\}, \
I_f(x,y) = \frac{(y-1)(y-2)}{2}+x.
\]

In NJ algorithm, we first compute the Q-criterion (cherry picking criterion):
\[
q_{xy} = (n-2) d_{xy} - \sum\limits_{z=1}^n d_{xz} - \sum\limits_{z=1}^n d_{zy}.
\]
Similar as the flattened tree metric $\mathbf d$, the Q-criterion can also be flattened to a $m$ dimensional vector $\mathbf q$ which can be obtained from $\mathbf d$ by linear transformation:
\[
\mathbf q = A^{(n)} \mathbf d,
\]
where matrix $A^{(n)}$ is defined as:
\[
A^{(n)}_{ij} =
\begin{cases} n-4 & \text{if } i=j \\
-1 & \text{if } i\neq j \text{ and } \{x,y\}\cap\{z,w\}\neq \emptyset \\
0 & \text{else}
\end{cases},
\]
where $(x,y) = I_f^{-1}(i)$ and $(z,w)=I_f^{-1}(j)$.

Now each entry in $\mathbf q$ corresponds to a pair of nodes in $T$, the next step of NJ algorithm is to find the minimum entry of $\mathbf q$ and join the corresponding two nodes as a cherry (``cherry picking''), then these two nodes will be replaced by a new node and the tree metric will be updated as $\mathbf d'$ (the dimension is reduced). We can see NJ algorithm proceeds by picking one cherry and reducing the size of the tree metric in each iteration until a binary tree is reconstructed. Without loss of generality and for the convenience of expression, we will only give details for the first iteration and assume the cherry we pick is $(n-1,n)$ in the rest part of this section.

First, to make cherry $(n-1,n)$ be the one to be picked, $q_m = q_{I_f(n-1,n)}$ needs to be the minimum in $\mathbf q$. This means the following inequalities need to be satisfied:
\[
(I_{m-1}, -\mathbf 1_{m-1})\mathbf q \geq 0 \Longrightarrow
H^{(n)} \mathbf d \geq 0,\ H^{(n)} = (I_{m-1}, -\mathbf 1_{m-1}) A^{(n)}.
\]
Note that if an arbitrary cherry is picked, then a permutation of columns need to be assigned to $H^{(n)}$.

Then, after picking cherry $(n-1,n)$, we join these two nodes as the new node $(n-1)^*$. Again, we can produce the new reduced tree metric from $\mathbf d$ by linear transformation:
\[
\mathbf d' = R \mathbf d,
\]
where $R=(r_{ij})\in \mathbb R^{(m-n+1)\times m}$,
\[
r_{ij} =
\begin{cases} 1 & \text{if } 1\leq i = j \leq {n-2 \choose 2}\\
1/2 & \text{if } {n-2 \choose 2} +1 \leq i \leq {n-1 \choose 2},\ j=i \\
1/2 & \text{if } {n-2 \choose 2} +1 \leq i \leq {n-1 \choose 2},\ j=i+n-2 \\
-1/2 & \text{if }  {n-2 \choose 2} +1 \leq i \leq {n-1 \choose 2},\ j=m\\
0 & \text{else}
\end{cases}.
\]

At last, after including inequalities in all iterations, by the shifting lemma in \cite{Rudy2008}, we also include the following equalities: $\forall$ node $x$,
\[
s_x^T \mathbf d =0,\ (s_x)_i =
\begin{cases} 1 & \text{if } x\in I_f^{-1}(i) \\
0 & \text{else}
\end{cases}.
\]

\subsection{H-representation of 5 taxa NJ cones}

There is only one tree topology for 5 taxa tree. Therefore, without loss of generality, we assume our true tree is Figure \ref{fig:5taxa}.


\begin{figure}[ht]
  \centering
  \subfigure[The 5 taxa tree]{
    \label{fig:5taxa}     
    \includegraphics[width=1.2in]{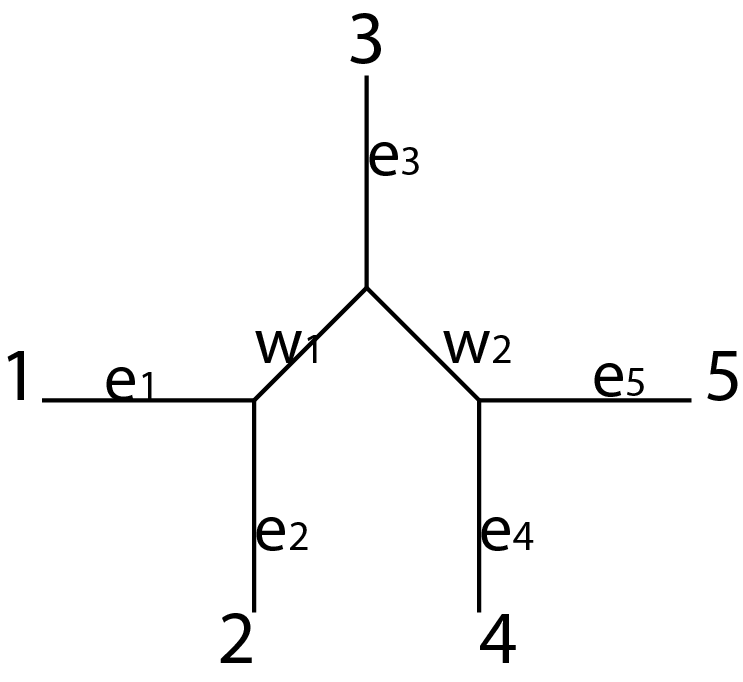}}
  \hspace{0.5in}
  \subfigure[Two orderings of picking cherries for 5 taxa tree]{
    \label{fig:5taxacp}     
    \includegraphics[width=3.8in]{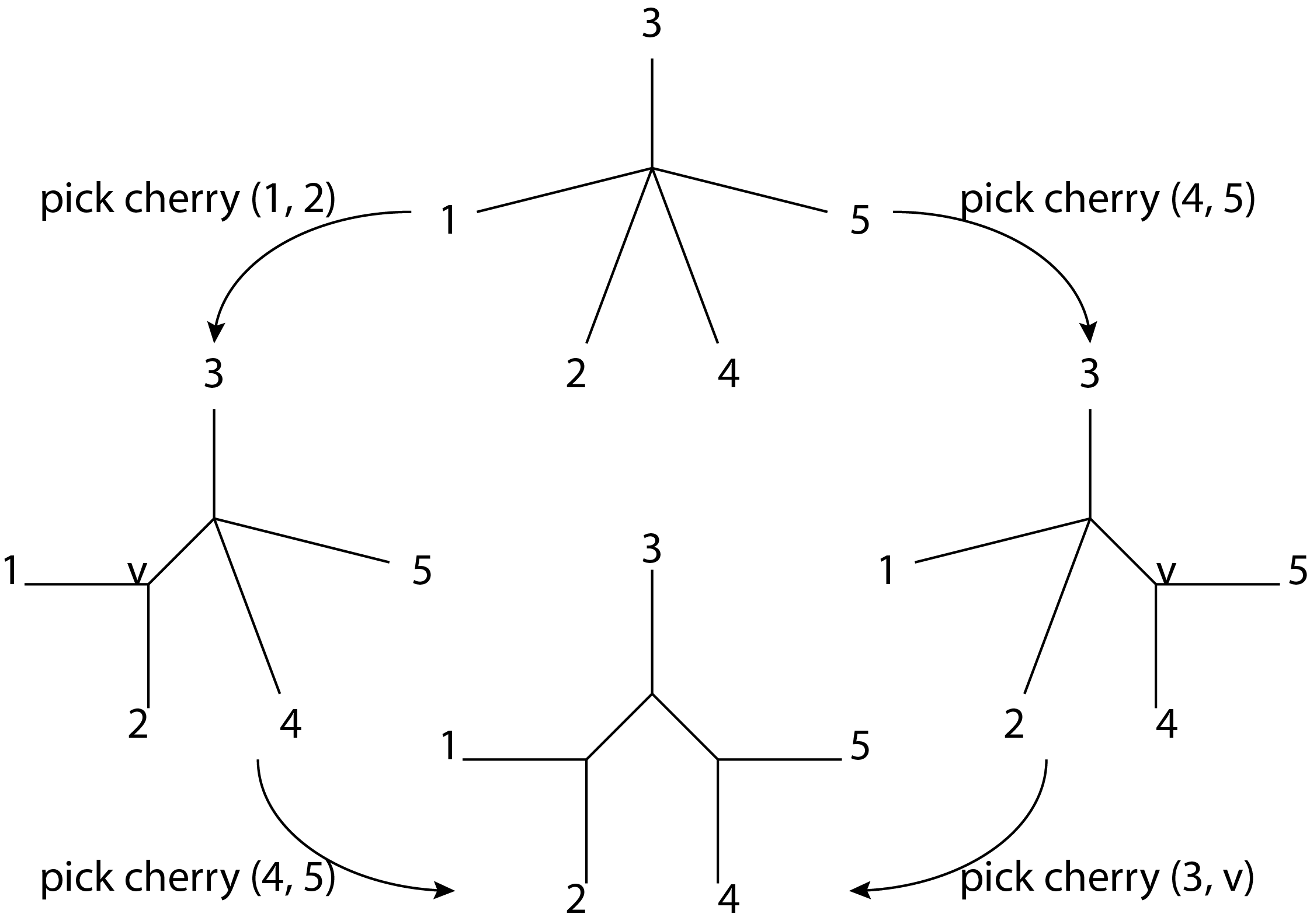}}
  \caption{The 5 taxa tree used to generate data, all edges has length 1}
  \label{fig:5taxaboth}     
{\footnotesize }
\end{figure}

For 5 taxa tree, the flattening for tree metric is:
\begin{center}
\begin{blockarray}{ccccccccccc}
xy: & 12 & 13 & 23 & 14 & 24 & 34 & 15 & 25 & 35 & 45 \\
\begin{block}{c(cccccccccc)}
$\mathbf d$ =  & $d_1$  &  $d_2$  &  $d_3$  &  $d_4$  &  $d_5$  &  $d_6$  &  $d_7$  &  $d_8$  &  $d_9$  &  $d_{10}$ \\
\end{block}
\end{blockarray}
\end{center}
In Section \ref{sec:njcone}, we can see that the permutation of columns for $H^{(n)}$ and $R$ depends on the cherry we pick. This means that we should compute NJ cones for all ordering of cherry picking (see the two orderings of cherry picking in Figure \ref{fig:5taxacp} for example).


There are four orderings of cherry picking. First we can pick cherry $(1,2)$ then pick cherry $(4,5)$, which we denote as $(1,2)-(4,5)$. Use a similar notation we have the other three: $(1,2)-((1,2), 3)$, $(4,5)-(1,2)$, and $(4,5)-(3,(4,5))$. Take the ordering $(4,5)-(3,(4,5))$ for example, use the results in Section \ref{sec:njcone} we can obtain the following linear constraints on $\mathbf d$:
\[
\left( \begin{array}{cccccccccc}
1 & -1 & -1 & 0 & 0 & 1 & 0 & 0 & 1 & -1 \\
-1 & 1 & -1 & 0 & 1 & 0 & 0 & 1 & 0 & -1 \\
-1 & -1 & 1 & 1 & 0 & 0 & 1 & 0 & 0 & -1 \\
-1 & -1 & 0 & 2 & 0 & 0 & 0 & 1 & 1 & -2 \\
-1 & 0 & -1 & 0 & 2 & 0 & 1 & 0 & 1 & -2 \\
0 & -1 & -1 & 0 & 0 & 2 & 1 & 1 & 0 & -2 \\
-1 & -1 & 0 & 0 & 1 & 1 & 2 & 0 & 0 & -2 \\
-1 & 0 & -1 & 1 & 0 & 1 & 0 & 2 & 0 & -2 \\
0 & -1 & -1 & 1 & 1 & 0 & 0 & 0 & 2 & -2 \\
-1 & 1 & 0 & 0 & 0.5 & -0.5 & 0 & 0.5 & -0.5 & 0 \\
-1 & 0 & 1 & 0.5 & 0 & -0.5 & 0.5 & 0 & -0.5 & 0
 \end{array} \right)
 \mathbf d \geq 0;
\]
\[
\left( \begin{array}{cccccccccc}
1 & 1 & 0 & 1 & 0 & 0 & 1 & 0 & 0 & 0 \\
1 & 0 & 1 & 0 & 1 & 0 & 0 & 1 & 0 & 0 \\
0 & 1 & 1 & 0 & 0 & 1 & 0 & 0 & 1 & 0 \\
0 & 0 & 0 & 1 & 1 & 1 & 0 & 0 & 0 & 1 \\
0 & 0 & 0 & 0 & 0 & 0 & 1 & 1 & 1 & 1
 \end{array} \right)
 \mathbf d = 0;
\]
Although it is not obvious to see,
 we found that the linear constraints for
 orderings $(1,2)-(4,5)$ and $(1,2)-((1,2), 3)$ are exactly the same, and the linear
 constraints for orderings $(4,5)-(1,2)$ and $(4,5)-(3,(4,5))$ are exactly the same.
  Therefore we only consider two NJ cones: the one for $(1,2)-(4,5)$ (denote as $\mathbf C_{(1,2)-(4,5)}$), and the
   one for $(4,5)-(1,2)$ (denote as $\mathbf C_{(4,5)-(1,2)}$).

\subsection{Computing the probability that NJ reconstructs the correct 5 taxa tree}

For the 5 taxa tree given in Figure \ref{fig:5taxa} under the random errors model, we know that the flattened $\delta$ should follow a multi-variate normal (MVN) distribution with mean $\mu = (2,3,3,4,4,3,4,4,3,2)$ and covariance matrix $\Sigma = \sigma^2 I_{10}$. To trace the performance of NJ algorithm under different variation, we let $\sigma^2$ to be a set of values in $(0,1]$ and then compute the probability that NJ algorithm reconstructs the correct tree for each value of $\sigma^2$.

For a given $\sigma^2$, it is trivial to see that the probability that NJ
algorithm returns the right tree is $Pr(\delta\in \mathbf C_{(1,2)-(4,5)}) +
Pr(\delta\in \mathbf C_{(4,5)-(1,2)}) - Pr(\delta\in \mathbf C_{(1,2)-(4,5)}
\cap \mathbf C_{(4,5)-(1,2)})$.

We used software \textbf{Polymake} \cite{Gawrilow2001} and verified that the dimension of $\mathbf
C_{(1,2)-(4,5)} \cap \mathbf C_{(4,5)-(1,2)}$ is lower than both of them,
therefore $Pr(\delta\in \mathbf C_{(1,2)-(4,5)} \cap \mathbf
C_{(4,5)-(1,2)})=0$. \textbf{Polymake} also gives us the facets of these two NJ
cones.
For example, the facets of $\mathbf C_{(4,5)-(1,2)}$ are:
\[
\left( \begin{array}{cccccccccc}
1 & -1 & -1 & 0 & 0 & 1 & 0 & 0 & 1 & -1 \\
-1 & -1 & 0 & 2 & 0 & 0 & 0 & 1 & 1 & -2 \\
-1 & 0 & -1 & 0 & 2 & 0 & 1 & 0 & 1 & -2 \\
0 & -1 & -1 & 0 & 0 & 2 & 1 & 1 & 0 & -2 \\
-1 & -1 & 0 & 0 & 1 & 1 & 2 & 0 & 0 & -2 \\
-1 & 0 & -1 & 1 & 0 & 1 & 0 & 2 & 0 & -2 \\
0 & -1 & -1 & 1 & 1 & 0 & 0 & 0 & 2 & -2 \\
-1 & 0 & 1 & 0.5 & 0 & -0.5 & 0.5 & 0 & -0.5 & 0 \\
-1 & 1 & 0 & 0 & 0.5 & -0.5 & 0 & 0.5 & -0.5 & 0
 \end{array} \right)
 \mathbf d \geq 0;
\]
With these facets, we can use the \textbf{R} function ``pmvnorm\{mvtnorm\}" with GenzBretz algorithm to compute $Pr(\delta\in \mathbf C_{(1,2)-(4,5)})$ and $Pr(\delta\in \mathbf C_{(4,5)-(1,2)})$.

\section{Computational experiments}\label{simulation}
In this section, we show both the theoretical and simulation probabilities that
the four point rule reconstructs the correct tree, as well as NJ algorithm and
BME method. In our computational experiments, we set all branch lengths to be $1$'s and
compute the probabilities for different values of $\sigma^2$.

\begin{wrapfigure}{r}{80mm}
  \centering
  \includegraphics[width=2.5in]{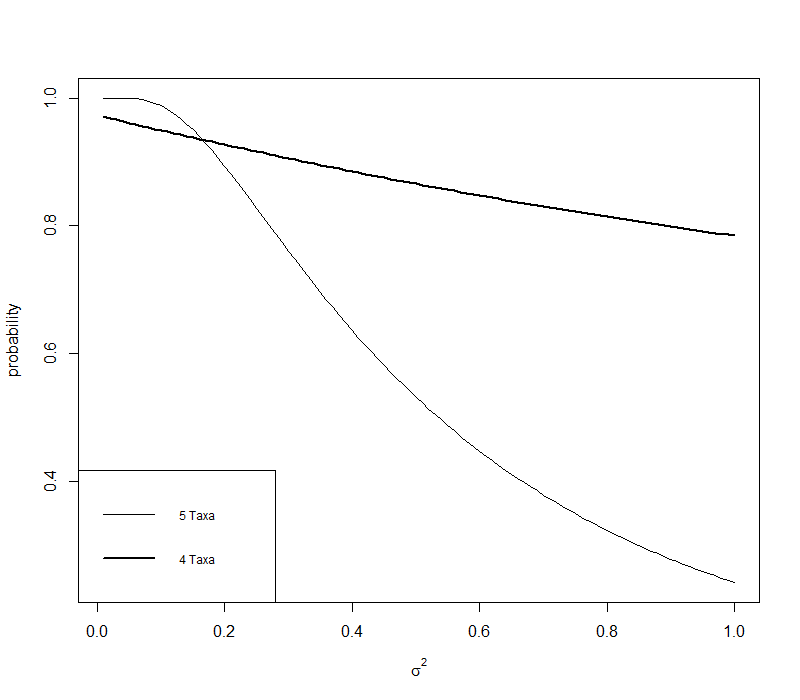}
  \caption{Theoretical probability that four point rule will return the
  correct 4 taxa tree,  and that 5 taxa tree metric with random errors still obeys
   the original four-point inequalities on each subset of four leaves.}
  \label{fig:4PR_Theory}
{\footnotesize }
\end{wrapfigure}

In Figure \ref{fig:4PR_Theory}, when $\sigma^2$ is increasing, the probability
 of 5 taxa tree will dramatically decrease faster than the probability of 4 taxa
  tree because we have more constraints to satisfy in 5 taxa tree which leads to lower probabilities.

In Figure \ref{fig:NJ_sim}, we calculated the theoretical probability
 that the NJ method reconstructs the correct 5 taxa tree based on Section \ref{nj}.
 For the simulation, we fix the 5 taxa tree in Figure \ref{fig:5taxa} with
  all branch lengths to be $1's$, and add i.i.d. normal random errors $\epsilon_{xy}'s$ to
   the pairwise distance matrix. Then we use \textbf{R} function
   ``nj\{ape\}" from the ``ape'' package in \textbf{R} \cite{APE} to reconstruct the tree.
   If the RF distance equals  0, it means that NJ method successfully returns the
    correct tree. Our simulation is based on 10,000 random trees.  Figure \ref{fig:NJ_sim} shows that
    the theoretical probabilities perfectly match the simulation result.

\begin{figure}[ht]
  \centering
  \subfigure[Theoretical Probability and Simulation for NJ on 5 Taxa Tree]{
    \label{fig:NJ_sim}     
    \includegraphics[width=2in]{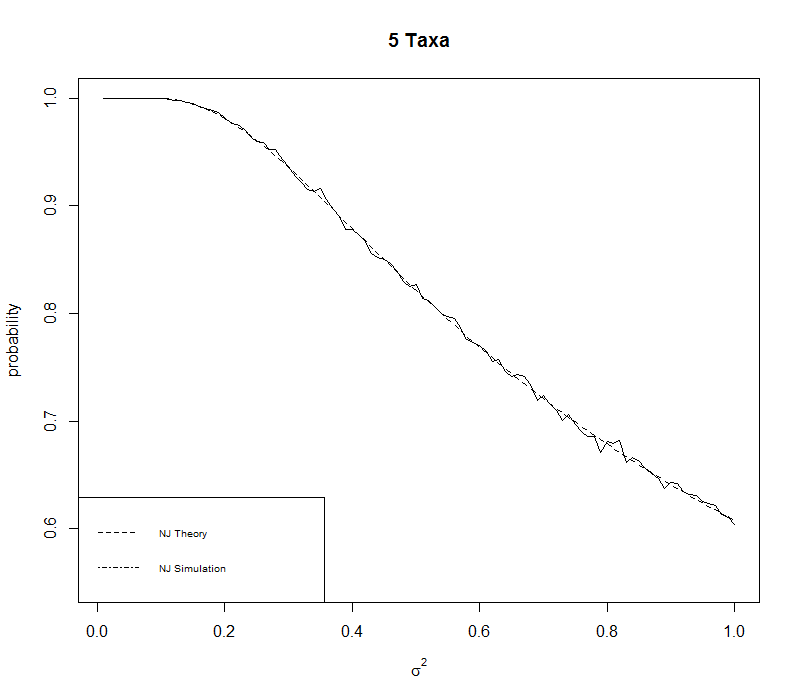}}
  \hspace{0.5in}
  \subfigure[Theoretical Probability and Simulation for BME on 5 Taxa Tree]{
    \label{fig:BMEsim}     
    \includegraphics[width=2.3in]{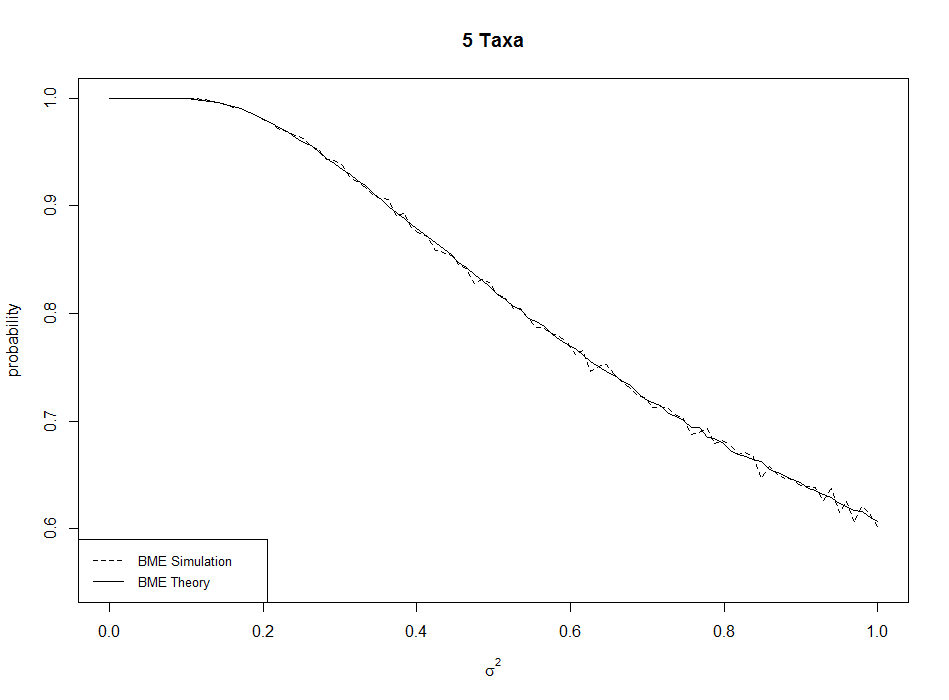}}
  \caption{Probability distributions for five leaves}
  \label{fig:sim}     
{\footnotesize }
\end{figure}


\begin{figure}[ht]
    \centering
    \includegraphics[width=4in]{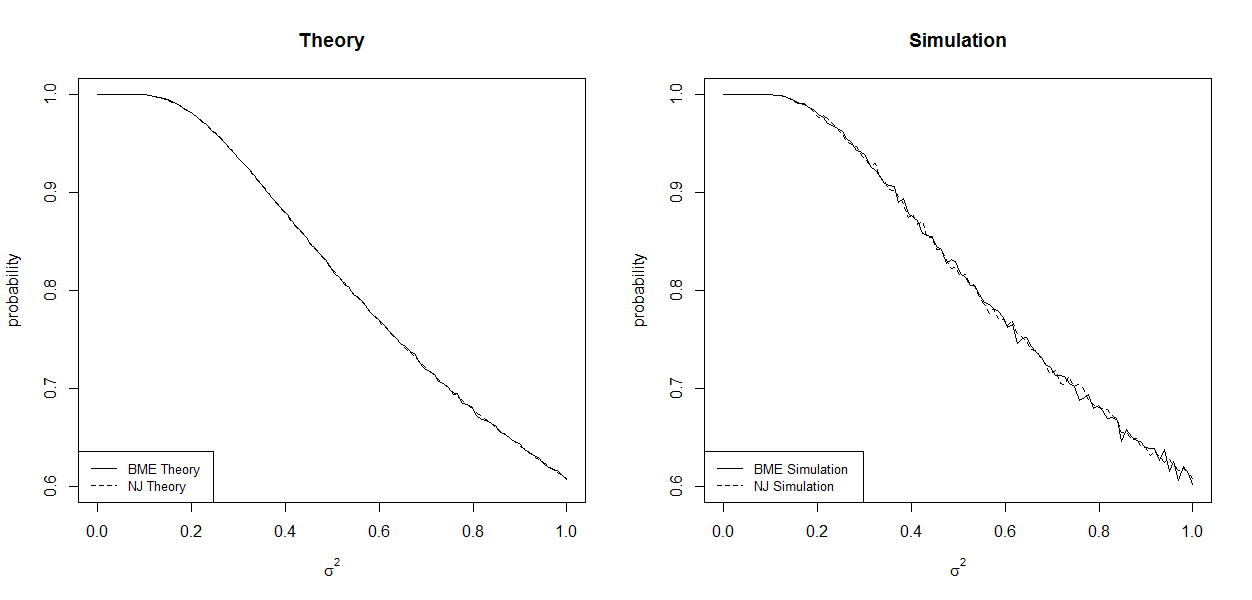}
    \caption{Comparison between BME and NJ on 5 Taxa Tree}
    \label{fig:comparison}
    {\footnotesize }
\end{figure}

\begin{wrapfigure}{r}{80mm}
    \centering
    \includegraphics[width=2.5in]{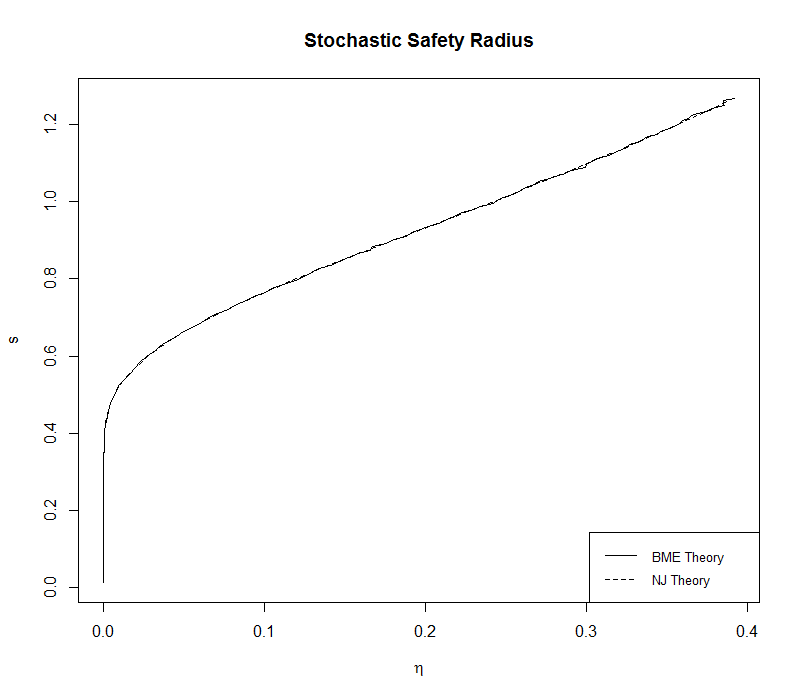}
    \caption{Stochastic safety radii for the NJ and BME methods for $n
      = 5$.  The x-axis represents $\eta$ and the y-axis represents
      the upper bound for $c/w_{\min}$ with $w_{\min} = 1$ for this experiment. }
    \label{fig:safety}
    {\footnotesize }
\end{wrapfigure}

In Figure \ref{fig:BMEsim}, we calculated the theoretical and simulated
probabilities that the BME method will return the correct 5 taxa tree. For the
theoretical probability, we generate 100,000 sets of random errors, and check
whether the theoretical rule is satisfied. In the end, we return the
percentage. For the simulation, we generate random trees in the similar way to
what we did for NJ algorithm. Then we used \textbf{R} function
``fastme.bal\{ape\}" to reconstruct the tree. Again we used RF distance to
check if the BME method successfully returned the correct tree. Our simulation
is based on 10,000 random trees. Figure \ref{fig:BMEsim} shows that the
theoretical probabilities perfectly match the simulation result.



Figure \ref{fig:comparison} shows that there is almost no difference between
BME and NJ methods on 5 taxa tree in both theoretical probabilities and
simulation result.

Figure \ref{fig:safety} shows the stochastic safety radii for the NJ
and the BME methods for $n = 5$ and $w_{\min} = 1$.  As shown in
Figure \ref{fig:safety} both stochastic safety radii are basically
almost identical in this case since the probability distributions
$P(M(\delta) = T)$ for the NJ and for the BME methods are almost
identically same shown in Figure \ref{fig:comparison}.  

\section*{Acknowledgements}
Stefan Forcey would like to thank the American Mathematical Society and the
Mathematical Sciences Program of the National Security Agency for supporting
this research through grant H98230-14-0121.\footnote{This manuscript is
submitted for publication with the understanding that the United States
Government is authorized to reproduce and distribute reprints.} 

\bibliographystyle{plain}
\bibliography{samsi}





\end{document}